\newtheorem{theorem}{Theorem}
\newtheorem{lemma}{Lemma}
\newtheorem{assumption}{Assumption}
\newcommand{\T}{\top}
\begin{document}

\title{Timed-Elastic-Band Based Variable Splitting for Autonomous Trajectory Planning}

\author{Hao Zhu, Kefan Jin, Rui Gao, Jialin Wang and~C.-J. Richard Shi,~\IEEEmembership{Fellow,~IEEE}

\thanks{H. Zhu and J. Wang  are with the Institute of Brain-Inspired Circuits and Systems, and Zhangjiang Fudan International Innovation Center, Fudan University, Shanghai, China, 201203. K. Jin and R.~Gao are with the Department of Ocean and Civil Engineering, Shanghai Jiao Tong University, Shanghai, China, 200240. C.-J. Richard Shi is with the Department of Electrical and Computer Engineering, University of Washington, Seattle, USA, WA 98195 (E-mail: 20112020135@fudan.edu.cn; jinkefan@sjtu.edu.cn; rui.gao@ieee.org; jialinwang16@fudan.edu.cn; cjshi.fudan@gmail.com). } 
}

\markboth{Journal of \LaTeX\ Class Files,~Vol.~14, No.~8, August~2021}
{Shell \MakeLowercase{\textit{et al.}}: A Sample Article Using IEEEtran.cls for IEEE Journals}

\maketitle

\begin{abstract}
Existing trajectory planning methods are struggling to handle the issue of autonomous track swinging during navigation, resulting in significant errors when reaching the destination. In this article, we address autonomous trajectory planning problems, which aims at developing innovative solutions to enhance the adaptability and robustness of unmanned systems in navigating complex and dynamic environments. We first introduce the variable splitting (VS) method as a constrained optimization method to reimagine the renowned Timed-Elastic-Band (TEB) algorithm, resulting in a novel collision avoidance approach named Timed-Elastic-Band based variable splitting (TEB-VS). The proposed TEB-VS demonstrates superior navigation stability, while maintaining nearly identical resource consumption to TEB. We then analyze the convergence of the proposed TEB-VS method. To evaluate the effectiveness and efficiency of TEB-VS, extensive experiments have been conducted using TurtleBot2 in both simulated environments and real-world datasets.

\end{abstract}

\begin{IEEEkeywords}
Timed-Elastic-Band (TEB), variable splitting, non-convexity, trajectory planning, autonomous systems.
\end{IEEEkeywords}

\section{Introduction}
\label{sec:introduction}

\IEEEPARstart{W}{ith} the continuous advancement of machine learning technologies, various autonomous systems are becoming integral parts of human daily life, including autonomous driving vehicles, unmanned aerial vehicles, and robotic manipulators \cite{kang2023two,zhang2024adaptive,Gao2020autonomous}, among others. These systems contribute significantly to enhancing the efficiency of human activities, prompting extensive efforts to explore novel practical applications for autonomous technologies. 

The traditional planning module in autonomous systems adopts a hierarchical structure, after map building with Simultaneous Localization and Mapping (SLAM) \cite{tan2022reconfigurable} or Visual simultaneous localization and mapping (VSLAM) \cite{liu2023boha}, featuring both a global planner and a local planner. The global planner accumulates environmental information and provides a globally optimal path, typically at a slower frequency. In contrast, the local planner is responsible for quickly searching for an optimal trajectory that adheres to the global path. Well-known global planning methods, such as Dijkstra's algorithm~\cite{1959A} and the A* algorithm~\cite{hart1968formal}, Jump Point Search algorithm~\cite{duchovn2014path}, rapidly exploring random tree (RRT) algorithm \cite{lavalle2001rapidly}, Informed-RRT*~\cite{6942976} and probability road map algorithm~\cite{simeon2000visibility}, are commonly employed for this purpose. In terms of local planning, Dynamic Window Approach (DWA) \cite{fox1997dynamic} and Timed-Elastic-Band (TEB)~\cite{rosmann2012trajectory}  can conceptualize the path planning problem as a multi-objective optimization problem. However, it turns out that when applied to non-convex optimization problems, the methods will remain computationally expensive.

Researchers have proposed various methods to enhance the performance of trajectory planning algorithms, particularly based on TEB and DWA algorithms.  Examples of these enhanced approaches include the Improved Dynamic Window Approach (IDWA) introduced by Zhang et al. \cite{zhang2009autonomous}, Model Predictive Control with TEB (MPC-TEB) by Rosmann et al. \cite{rosmann2015timed}, and egoTEB presented by Smith et al. \cite{smith2020egoteb}. It appears that one of the challenges is the difficulty in addressing the robot swing in the path, and the existing methods are not effectively mitigating this issue. Additionally, there seems to be a concern about accumulated errors during the movement process, leading to significant deviations from the intended trajectory, especially when reaching the endpoint.

Numerous optimization methods, such as Douglas-Rachford splitting (DRS) \cite{Douglas1992}, Peaceman--Rachford splitting (PRS) \cite{Peaceman1955}, first-order primal-dual (FOPD) method~\cite{Chambolle2011fopd}, and alternating direction method of multipliers (ADMM)\cite{ Boyd2011admm}, leverage variable splitting and are well-suited for a range of constrained optimization problems. These techniques are particularly adept at transforming problems with inequality constraints into equivalent problems with equality constraints through variable splitting. In cases where inequality constraints are present, the solution of objective problems can be hard to directly compute.

\begin{figure}[!ht]
	\centering
	\includegraphics[scale=0.5,width=0.5\textwidth]{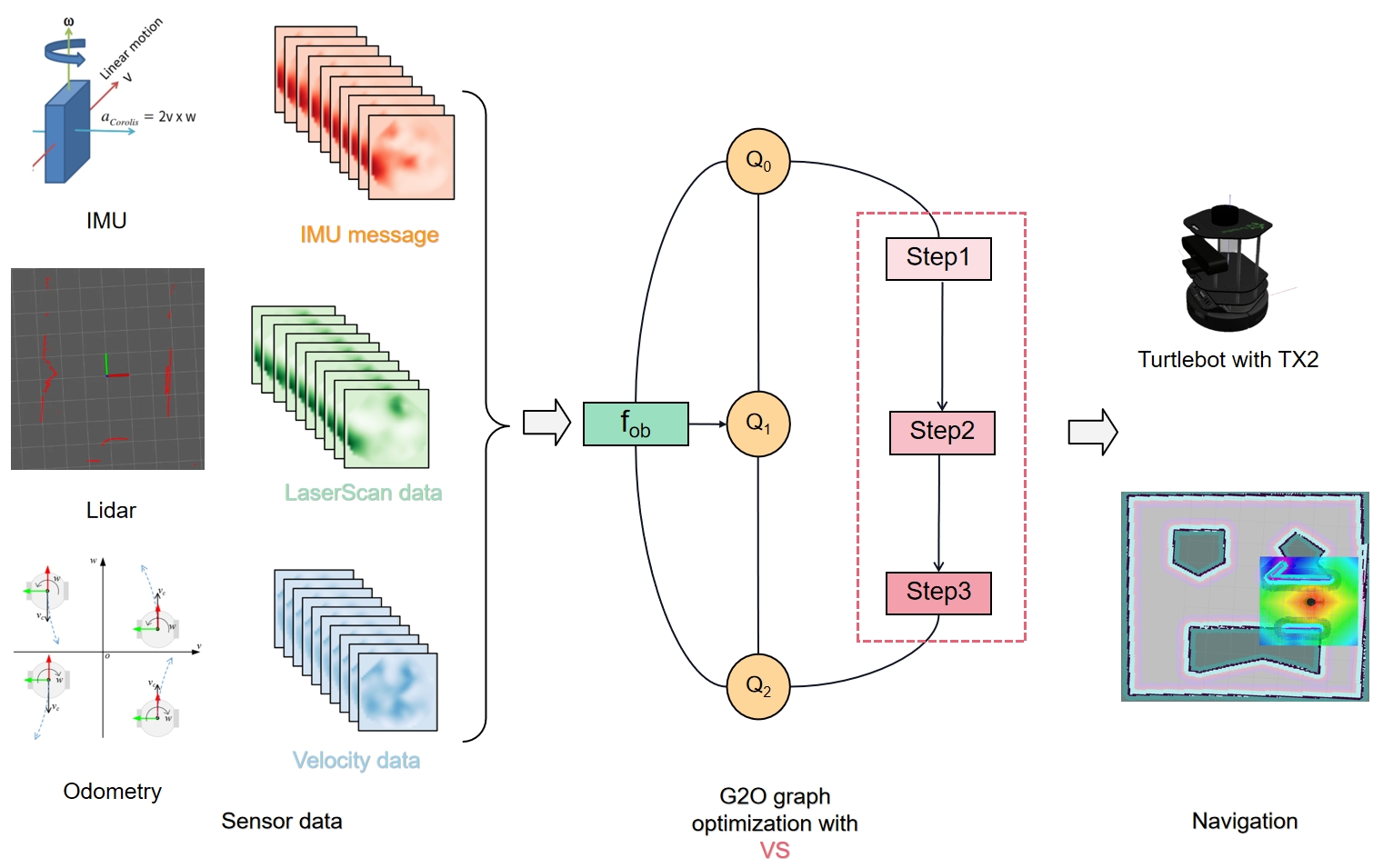}
\caption{Our architecture: i) Real-time dynamic robot state information $Q_i$ is acquired from sensors, encompassing variables such as position, orientation, and other pertinent data. ii) In the method, the objective function is introduced as a new edge in the graph, replacing the original constraint. This step is pivotal for assimilating the results of the optimization process back into the overall trajectory planning framework. Steps 1-3 correspond to formulas \eqref{eq:admm_x_linear}, \eqref{eq:admm_iteration_v}, and \eqref{eq:admm_iteration_eta}). iii) General Graph Optimization (G2O) \cite{2011G2o} is employed to solve the subproblem arising in VS framework, where each vertex in the graph model denotes as a state, and edges represent objective functions. iv) The output of TEB-VS is the optimized trajectory for the robot.}
	\label{fig:System of simulation}
\end{figure}

In this article, we present a Timed-Elastic-Band based variable splitting (TEB-VS) method for solving autonomous trajectory planning problems. TEB-VS is highly effective for decomposing the global constrained optimization problem into several small constrained optimization problems, which has the superiority of computation time efficiency. The method is designed to be highly effective in decomposing the global constrained optimization problem into a series of local small constrained optimization problems. We also establish the convergence analysis of the TEB-VS method under mild assumptions. To assess the effectiveness and efficiency of TEB-VS, we apply the TEB-VS method to real-world datasets. The overall architecture is illustrated in Fig.~\ref{fig:System of simulation}.

\section{Problem Formulation}
\label{Section:rreliminary}

We consider general TEB sequences of $I$ robot configurations as

\begin{equation} \label{eq:TEB}
	\begin{split}
		\Theta = \{\mathbf{x}_0,(\mathbf{x}_i,\Delta T_i)\},\,i=1,\ldots,I.
	\end{split}
\end{equation}

Here, $\Delta T_i$ represents the time interval, $\mathbf{x}_i = [x_i, y_i, \beta_i]^\T$ denotes the configuration of the robot pose, and $\mathbf{x}_0$ is an initial parameter. $x_i$, $y_i$, and $\beta_i$ represent the $x$-directional position, the $y$-directional position, and the orientation of the robot in the map coordinate system, respectively. In each time interval, the robot transitions from the current configuration to the next configuration in the TEB sequence $\Theta$. The primary objective is to derive the optimal TEB sequence $\Theta^\star$ as the trajectory.

Let $f_i(\Theta_i)$ be defined as the function associated with each robot configuration $\Theta_i \in \Theta$. Given a set of robot configurations $\{\mathbf{x}_i\}_{i=0}^{I}$ and corresponding time intervals $\{\Delta T_i\}_{i=1}^{I}$, our objective is to perform constrained optimization.

\begin{equation} \label{eq:general_function}
	\begin{split}
		\min_{\Theta}
		&F(\Theta) \\
		{\mathrm{s.t.}}\
		&\mathbf{c}_k \big(\mathbf{x}_0, \mathbf{x}_1, \ldots,\mathbf{x}_I \big)  = \mathbf{0},\, k=1,\ldots,K, \\
		&\mathbf{p}_j \big(\mathbf{x}_0, \mathbf{x}_1, \ldots,\mathbf{x}_I \big) \leq \mathbf{0},\, j=1,\ldots,J,
	\end{split}
\end{equation}

where $F(\Theta) = \sum \gamma_i f_i(\Theta_i)$ represents the cost function, and $\gamma_i$ serves as the weight for each function $f_i(\Theta_i)$. The parameters $K$ and $J$ denote the number of equality and inequality constraint functions, respectively, and $\mathbf{c}_k$, $\mathbf{p}_j$ represent constraint functions—such as approximating turning radius, enforcing maximum velocities, maximum accelerations, and minimum obstacle separation. 

Solving the problem \eqref{eq:general_function} presents a greater challenge compared to the conventional TEB problem, as the constraint functions in \eqref{eq:general_function} are nonconvex. In this following, we begin by establishing the framework of the proposed TEB-VS method.

\section{Methodology}
\label{Section:method}

In this section, we first introduce the framework of the proposed TEB-VS optimization method, and then derive the augmented TEB algorithm for solving the subproblems arising within the steps of the method. Under mild conditions, we also give the convergence analysis of the method.

\subsection{The General Framework}

In this section, we exploit auxiliary variables to split the complex components, and introduce a barrier function $I(\mathbf{v})$ to replace the inequality constrained function. We can rewrite \eqref{eq:general_function} as an equality constrained minimization problem

\begin{equation} \label{eq:general_function_inequality}
	\begin{split}
		\begin{aligned}
			\min_{\Theta}
			&F(\Theta), \\
			{\mathrm{s.t.}}\
			&\mathbf{c}_k \big(\mathbf{x}_0, \mathbf{x}_1, \ldots,\mathbf{x}_I \big)  = \mathbf{0}, \, k = 1,\ldots,K, \\
			&\mathbf{p}_j \big(\mathbf{x}_0, \mathbf{x}_1, \ldots,\mathbf{x}_I \big) + \mathbf{v}_j = \mathbf{0}, \, j = 1,\ldots,J.\\
			& I(\mathbf{v}_j)= \begin{cases}
				0, &   \mathbf{v} \geq 0, \\
				\infty, &     \text{otherwise}.
			\end{cases}
		\end{aligned}
	\end{split}
\end{equation}

For simplifying the notation, we define the states $\mathbf{x} =\{\mathbf{x}_1, \ldots,\mathbf{x}_I\}$, $\mathbf{v} =\{\mathbf{v}_1, \ldots,\mathbf{v}_J \}$, and Lagrangian multipliers $\bm{\eta} = \{\bm{\eta}_1,\ldots,\bm{\eta}_J\}$, $\bm{\zeta}=\{\bm{\zeta}_1,\ldots,\bm{\zeta}_K\}$.
The augmented Lagrangian function is then defined as

\begin{equation}\label{eq:Lagrangian_func}
	\begin{split}
		\begin{aligned}
			&\mathcal{L}(\mathbf{x},\mathbf{v};\bm{\eta},\bm{\zeta})  =
			\sum_{i=1}^{I} \gamma_i f_i(\Theta_i)
			+ \sum_{j=1}^{J}\bm{\eta}_j^\T( \mathbf{p}_j(\mathbf{x})  + \mathbf{v}) \\
			&+ \sum_{j=1}^{J} \frac{\rho}{2} \|  \mathbf{p}_j(\mathbf{x})  + \mathbf{v} \|^2
			+ \sum_{k=1}^{K}\bm{\zeta}_k^\T(\mathbf{c}_k(\mathbf{x}))
			+ \sum_{k=1}^{K}\frac{\gamma}{2} \|\mathbf{c}_k(\mathbf{x})\|^2,
		\end{aligned}
	\end{split}
\end{equation}

where $\rho$, $\gamma$ are penalty parameters with the constraints, respectively. At each iteration, we can write

\begin{subequations}
	\label{eq:admm_step}
	\begin{align} \label{eq:admm_x_linear}
		\mathbf{x}^{(n+1)} &= \mathop{\arg\min}_\mathbf{x}\mathcal{L}(\mathbf{x},\mathbf{v}^{(n)};\bm{\eta}^{(n)},\bm{\zeta}^{(n)}), \\
		\label{eq:admm_iteration_v}
		\mathbf{v}^{(n+1)} &= \mathop{\arg\min}_\mathbf{v}\mathcal{L}(\mathbf{x}^{(n+1)},\mathbf{v};\bm{\eta}^{(n)},\bm{\zeta}^{(n)}), \\
		\label{eq:admm_iteration_eta}
		\bm{\eta}_j^{(n+1)} & = \bm{\eta}_j^{(n)} +  \rho (\mathbf{p}_j(\mathbf{x}^{(n+1)})  + \mathbf{v}^{(n+1)}) , \\
		\bm{\zeta}_k^{(n+1)} &= \bm{\zeta}_k^{(n)} + \gamma \mathbf{c}_k(\mathbf{x}^{(n+1)}),
	\end{align}
\end{subequations}

where $n$ is the iteration number. In the objective function \eqref{eq:general_function}, acceleration and velocity optimization is one of the applications of Lagrangian Splitting. The velocity and acceleration can be minimax limits on velocity and acceleration. For the dual variable $\mathbf{v}$, the solution in \eqref{eq:admm_iteration_v} can be computed by
by 
\begin{equation}
		\begin{aligned}
			\mathbf{v}_j^{(n+1)} &= {\max} \left( \mathbf{0}, \,  -\mathbf{p}_j(\mathbf{x}^{(n+1)}) - \bm{\eta}_j^{(n)}  /\rho \right). 
			\end{aligned}
\end{equation}

The primal function \eqref{eq:admm_x_linear} is non-convex and lacks a closed-form solution. Consequently, we employ the augmented TEB method to address this challenge in the following.

\subsection{Augmented TEB for Solving $\mathbf{x}$-subproblem}

In this section, we propose the augmented TEB algorithm for solving the $\mathbf{x}$-subproblem arising in the VS framework. Typically we have the objective function

\begin{equation}\label{eq:primal_x_func}
	\begin{split}
		\begin{aligned}
			&\mathbf{x}  =
			\mathop{\arg\min}_\mathbf{x} \sum_{i=1}^{I} \gamma_i f_i(\Theta_i)\\
			&+ \sum_{j=1}^{J} \frac{\rho}{2} \left \|  \mathbf{p}_j(\mathbf{x})  + \mathbf{v} + \frac{\bm{\eta}_j}{\rho} \right\|^2 
			+ \sum_{k=1}^{K}\frac{\gamma}{2} \left\|\mathbf{c}_k(\mathbf{x})+ \frac{\bm{\zeta}_k}{\gamma} \right\|^2.
		\end{aligned}
	\end{split}
\end{equation}

The problem of optimizing $\{(\mathbf{x}_1,\Delta T_1), \ldots,(\mathbf{x}_I,\Delta T_I) \}$ can be addressed by constructing the G2O hypergraph \cite{rosmann2012trajectory,2011G2o}. Utilizing a large-scale sparse matrix, the G2O serves as the optimization model. As depicted in Fig.\ref{fig:System of simulation}, the components of time $\{\Delta T_i\}$ and configuration are treated as nodes, while objective functions act as edges. The combination of nodes and edges forms the hypergraph. In the augmented TEB method, the classical constraint edges are transformed into objective functions using variable splitting. This process involves representing the edges from traditional constraints to objective functions with variable splitting techniques. The optimized edges and vertices resulting from this transformation are then incorporated into the sparse optimizer. Subsequently, the sparse optimizer generates the corresponding results.

On the premise that the other input parameters of TEB are unchanged, replacing the speed constraint to \eqref{eq:admm_step} as edge. In a similar manner, the VS method can be applied to handle other constraints and objectives within the trajectory planning problem. For instance, if there are constraints related to waypoints, obstacles, and the fastest path, these can be reformulated and replaced using the VS method. 

\subsection{Convergence Analysis}

In this section, we establish the convergence analysis of the method. We first make the following assumptions.

\begin{assumption}
	\label{assump:prox-regular}
The function $F(\mathbf{x})$ is strongly amenable at $\mathbf{x}$ and $F(\mathbf{x})$ is \textit{prox-regular} \cite{Rockafellar1998Variational}. That is, for any $\mathbf{x}_a$, $\mathbf{x}_b $ in a neighbourhood of $\mathbf{x}$, there exists $e_{x}> 0$ such that  

\begin{equation}\label{eq:prox-regular}
	\begin{split}
		\begin{aligned}
			F(\mathbf{x}_a) \geq F(\mathbf{x}_b) - \frac{e_{x}}{2} \|\mathbf{x}_a - \mathbf{x}_b \|_2^2  +  \langle \partial F(\mathbf{x}), \mathbf{x}_a - \mathbf{x}_b\rangle.
		\end{aligned}
	\end{split}
\end{equation}

\end{assumption}
\begin{assumption}
	\label{assump:teb}
The sequence $\{\mathbf{x}^{(n)}\}$ generated by the augmented TEB converges to a local minimum $\mathbf{x}^\star$. 
\end{assumption}

We then prove that the sequence  $\{\mathcal{L}(\mathbf{x}^{(n)},\mathbf{v}^{(n)},\bm{\eta}^{(n)},\bm{\zeta}^{(n)}) \}$ generated by TEB-VS is monotonically non-increasing in the following lemma.

\begin{lemma}
	\label{lemma:nonincreasing}
	Let $\{\mathbf{x}^{(n)},\mathbf{v}^{(n)}, \bm{\eta}^{(n)},\bm{\zeta}^{(n)}\}$ be the sequence. If $\rho e_c^2 +  \gamma e_p^2 >e_x $, and the Jacobian $\mathbf{J}_c$, and  $\mathbf{J}_p$ have full-column rank with 
	
	\begin{equation}
		\begin{split}
			\mathbf{\mathbf{J}}_c^\T \mathbf{\mathbf{J}}_c \succeq e_c^2 \mathbf{I}, \,	
			\mathbf{\mathbf{J}}_p^\T \mathbf{\mathbf{J}}_p \succeq e_p^2 \mathbf{I}, 
			\quad e_c, e_p > 0,
		\end{split}
	\end{equation}
	
	Then, the sequence $\{\mathbf{x}^{(n)},\mathbf{v}^{(n)}, \bm{\eta}^{(n)},\bm{\zeta}^{(n)}\}$ is nonincreasing with the iteration number $n$. 
\end{lemma}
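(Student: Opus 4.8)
The plan is to show that a single sweep of the updates in \eqref{eq:admm_step} cannot increase $\mathcal{L}$, by telescoping the change across the four block updates and then dominating the (ascending) dual contributions with the (descending) primal ones. Writing $\mathcal{L}^{(n)} = \mathcal{L}(\mathbf{x}^{(n)},\mathbf{v}^{(n)};\bm{\eta}^{(n)},\bm{\zeta}^{(n)})$, I would insert the intermediate iterates $(\mathbf{x}^{(n+1)},\mathbf{v}^{(n)};\bm{\eta}^{(n)},\bm{\zeta}^{(n)})$ and $(\mathbf{x}^{(n+1)},\mathbf{v}^{(n+1)};\bm{\eta}^{(n)},\bm{\zeta}^{(n)})$, etc., to obtain
\begin{equation}
\mathcal{L}^{(n+1)} - \mathcal{L}^{(n)} = \Delta\mathcal{L}_{\mathbf{x}} + \Delta\mathcal{L}_{\mathbf{v}} + \Delta\mathcal{L}_{\bm{\eta}} + \Delta\mathcal{L}_{\bm{\zeta}},
\end{equation}
where each term is the change caused by the corresponding update in \eqref{eq:admm_x_linear}--\eqref{eq:admm_iteration_eta}. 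I would then bound the four terms separately.

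For the two primal terms, I would use that $\mathbf{x}^{(n+1)}$ and $\mathbf{v}^{(n+1)}$ are (local) minimizers of the respective blocks by \eqref{eq:admm_x_linear}, \eqref{eq:admm_iteration_v} and Assumption~\ref{assump:teb}, so both terms are non-positive. More quantitatively, the quadratic penalties contribute curvature along $\mathbf{x}$ of order $\rho\,\mathbf{J}_p^\T\mathbf{J}_p + \gamma\,\mathbf{J}_c^\T\mathbf{J}_c \succeq (\rho e_c^2 + \gamma e_p^2)\mathbf{I}$, which when combined with the prox-regularity lower bound \eqref{eq:prox-regular} for $F$ (whose weak-convexity modulus is $e_x$) yields a strongly-convex descent estimate
\begin{equation}
\Delta\mathcal{L}_{\mathbf{x}} \le -\tfrac{1}{2}\big(\rho e_c^2 + \gamma e_p^2 - e_x\big)\,\|\mathbf{x}^{(n+1)} - \mathbf{x}^{(n)}\|_2^2 .
\end{equation}
Because $\mathcal{L}$ is $\rho$-strongly convex in $\mathbf{v}$ (the only quadratic in $\mathbf{v}$ is the $\tfrac{\rho}{2}\|\cdot\|^2$ penalty, plus the convex indicator $I(\mathbf{v})$), I would likewise bound $\Delta\mathcal{L}_{\mathbf{v}} \le -\tfrac{\rho}{2}\|\mathbf{v}^{(n+1)} - \mathbf{v}^{(n)}\|_2^2$.

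For the two dual terms, substituting \eqref{eq:admm_iteration_eta} directly gives the ascent exactly as a squared multiplier increment,
\begin{equation}
\Delta\mathcal{L}_{\bm{\eta}} = \sum_{j}\rho\,\|\mathbf{p}_j(\mathbf{x}^{(n+1)}) + \mathbf{v}^{(n+1)}\|_2^2 = \tfrac{1}{\rho}\sum_{j}\|\bm{\eta}_j^{(n+1)} - \bm{\eta}_j^{(n)}\|_2^2 ,
\end{equation}
and analogously $\Delta\mathcal{L}_{\bm{\zeta}} = \tfrac{1}{\gamma}\sum_k\|\bm{\zeta}_k^{(n+1)} - \bm{\zeta}_k^{(n)}\|_2^2$. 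The crucial maneuver is to control these by the primal progress: differencing the first-order stationarity condition of the $\mathbf{x}$-subproblem \eqref{eq:primal_x_func} at iterations $n$ and $n-1$ expresses the multiplier increments through $\partial F$ and the Jacobian-weighted residuals, and the full-column-rank bounds $\mathbf{J}_p^\T\mathbf{J}_p\succeq e_p^2\mathbf{I}$, $\mathbf{J}_c^\T\mathbf{J}_c\succeq e_c^2\mathbf{I}$ then let me majorize $\|\bm{\eta}^{(n+1)}-\bm{\eta}^{(n)}\|_2^2$ and $\|\bm{\zeta}^{(n+1)}-\bm{\zeta}^{(n)}\|_2^2$ by a constant multiple of $\|\mathbf{x}^{(n+1)}-\mathbf{x}^{(n)}\|_2^2$, with $e_x$ entering through \eqref{eq:prox-regular}.

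Finally I would collect the four estimates: the total change becomes a negative combination of $\|\mathbf{x}^{(n+1)}-\mathbf{x}^{(n)}\|_2^2$ and $\|\mathbf{v}^{(n+1)}-\mathbf{v}^{(n)}\|_2^2$, whose $\mathbf{x}$-coefficient is rendered nonnegative precisely by the hypothesis $\rho e_c^2 + \gamma e_p^2 > e_x$; hence $\mathcal{L}^{(n+1)}\le\mathcal{L}^{(n)}$. I expect the dual-by-primal bound of the third paragraph to be the main obstacle: since $\mathbf{p}_j$ and $\mathbf{c}_k$ are nonconvex and only their Jacobian Gram matrices are assumed bounded below, the cross terms between $\partial F$ and the constraint residuals must be absorbed carefully, and the whole estimate is valid only in the neighbourhood where Assumptions~\ref{assump:prox-regular} and \ref{assump:teb} hold.
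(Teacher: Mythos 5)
Your decomposition coincides with the paper's proof skeleton: the paper likewise telescopes the change in $\mathcal{L}$ over the block updates, derives your two dual-ascent identities exactly (its \eqref{eq:zeta_sub} and \eqref{eq:eta_sub}), and establishes your $\mathbf{x}$-descent estimate $\Delta\mathcal{L}_{\mathbf{x}}\le -\tfrac12\big(\rho e_c^2+\gamma e_p^2-e_x\big)\|\mathbf{x}^{(n+1)}-\mathbf{x}^{(n)}\|^2$ in \eqref{eq:x_sub} from prox-regularity \eqref{eq:prox-regular} together with $\mathbf{J}_c^\T\mathbf{J}_c\succeq e_c^2\mathbf{I}$ and $\mathbf{J}_p^\T\mathbf{J}_p\succeq e_p^2\mathbf{I}$ (your extra $-\tfrac{\rho}{2}\|\mathbf{v}^{(n+1)}-\mathbf{v}^{(n)}\|^2$ term is simply omitted there, harmlessly, since it is nonpositive). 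The genuine divergence is your third paragraph. The paper contains no dual-by-primal control at all: its final estimate \eqref{eq:non_increasing} still carries the positive terms $\tfrac1\rho\|\bm{\eta}^{(n+1)}-\bm{\eta}^{(n)}\|^2+\tfrac1\gamma\|\bm{\zeta}^{(n+1)}-\bm{\zeta}^{(n)}\|^2$ on its right-hand side and then asserts the conclusion from $\rho e_c^2+\gamma e_p^2>e_x$ alone, which does not follow. So you correctly identified the step on which the argument actually hinges, and the differencing-of-stationarity technique you invoke is the standard repair in nonconvex ADMM/augmented-Lagrangian analyses; in this respect your plan is more rigorous in intent than the paper's own proof.

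That said, your sketch of the crucial step does not close under the lemma's stated hypotheses, for three concrete reasons. (i) The stationarity condition of \eqref{eq:admm_x_linear} delivers the multiplier increments only through $\mathbf{J}_p^\T$ and $\mathbf{J}_c^\T$; to recover $\|\bm{\eta}^{(n+1)}-\bm{\eta}^{(n)}\|$ from $\mathbf{J}_p^\T(\bm{\eta}^{(n+1)}-\bm{\eta}^{(n)})$ you need a bound of the form $\mathbf{J}_p\mathbf{J}_p^\T\succeq\sigma^2\mathbf{I}$ (full \emph{row} rank, a surjectivity/LICQ-type condition), whereas the lemma assumes $\mathbf{J}^\T\mathbf{J}\succeq e^2\mathbf{I}$ (full \emph{column} rank), which is the direction used in \eqref{eq:x_sub} but is useless for majorizing dual increments. (ii) Differencing $\partial F$ between consecutive iterates requires a Lipschitz-type \emph{upper} bound on subgradients; Assumption~\ref{assump:prox-regular} provides only the one-sided lower estimate \eqref{eq:prox-regular}, so the constants $C_\eta,C_\zeta$ in a majorization $\|\bm{\eta}^{(n+1)}-\bm{\eta}^{(n)}\|^2\le C_\eta\|\mathbf{x}^{(n+1)}-\mathbf{x}^{(n)}\|^2$ are not available from Assumptions~\ref{assump:prox-regular}--\ref{assump:teb}; note also that $\bm{\eta}^{(n+1)}-\bm{\eta}^{(n)}=\rho(\mathbf{p}(\mathbf{x}^{(n+1)})+\mathbf{v}^{(n+1)})$ while the $\mathbf{x}$-update was taken at $\mathbf{v}^{(n)}$, so a $\|\mathbf{v}^{(n+1)}-\mathbf{v}^{(n)}\|^2$ contribution appears and must be absorbed by your $\mathbf{v}$-descent term. (iii) Even granting such constants, the collected coefficient of $\|\mathbf{x}^{(n+1)}-\mathbf{x}^{(n)}\|^2$ is $-\tfrac12(\rho e_c^2+\gamma e_p^2-e_x)+\tfrac{C_\eta}{\rho}+\tfrac{C_\zeta}{\gamma}$, whose sign is \emph{not} fixed by $\rho e_c^2+\gamma e_p^2>e_x$ ``precisely,'' contrary to your final paragraph; one needs the strictly stronger requirement $\tfrac12(\rho e_c^2+\gamma e_p^2-e_x)\ge\tfrac{C_\eta}{\rho}+\tfrac{C_\zeta}{\gamma}$, i.e., penalty parameters large relative to the hidden constants. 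In short, your strategy is the right one and exposes a real gap in the paper (which silently drops the dual terms when passing from \eqref{eq:non_increasing} to the conclusion), but as written your proof establishes the lemma only under assumptions stronger than those stated.
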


\begin{proof}
	For the $\bm{\zeta}$-subproblem, we obtain 
	
	\begin{equation}
		\label{eq:zeta_sub}
		\begin{split}
			\begin{aligned}
				& \mathcal{L}(\mathbf{x}^{(n+1)},\mathbf{v}^{(n+1)};\bm{\eta}^{(n+1)},\bm{\zeta}^{(n+1)}) - \\
				&\quad \mathcal{L}(\mathbf{x}^{(n+1)},\mathbf{v}^{(n+1)};\bm{\eta}^{(n+1)},\bm{\zeta}^{(n)})  \\
				&=  \frac{1}{\gamma} \| \bm{\zeta}^{(n+1)} -  \bm{\zeta}^{(n)}\|^2.
			\end{aligned}
		\end{split} 
	\end{equation} 
	
	Similar to the $\bm{\eta}$-subproblem, we obtain 
	
	\begin{equation}
		\label{eq:eta_sub}
		\begin{split}
			\begin{aligned}
				& \mathcal{L}(\mathbf{x}^{(n+1)},\! \mathbf{v}^{(n+1)};\! \bm{\eta}^{(n+1)},\! \bm{\zeta}^{(n)})\! - \! 
				\mathcal{L}(\mathbf{x}^{(n+1)},\! \mathbf{v}^{(n+1)};\! \bm{\eta}^{(n)},\! \bm{\zeta}^{(n)})  \\
				&  = \langle	\bm{\eta}^{(n+1)}, \mathbf{c}(\mathbf{x}^{(n)}) + \mathbf{v}  \rangle -
				\langle  	\bm{\eta}^{(n)}, \mathbf{c}(\mathbf{x}^{(n)}) + \mathbf{v}\rangle  \\
				&=  \frac{1}{\rho} \| \bm{\eta}^{(n+1)} -  \bm{\eta}^{(n)}\|^2.
			\end{aligned}
		\end{split} 
	\end{equation} 
	
 We set $F(\mathbf{x}) = \sum_{i=1}^{I} F(\mathbf{x}_i,\Delta T_i)$. For the $\mathbf{x}$-subproblem in \eqref{eq:admm_x_linear}, we can obtain
	
	\begin{equation}  \label{eq:x_sub}
		\begin{split}
			&  \mathcal{L}(\mathbf{x}^{(n)},\mathbf{v}^{(n)};\bm{\eta}^{(n)},\bm{\zeta}^{(n)}) - 
			\mathcal{L}(\mathbf{x}^{(n+1)},\mathbf{v}^{(n)};\bm{\eta}^{(n)},\bm{\zeta}^{(n)})  \\
			& = F (\mathbf{x}^{(n)})  
			+ \langle \bm{\eta}^{(n)},  \mathbf{v}^{(n)} + \mathbf{c}(\mathbf{x}^{(n)}) \rangle 
			+ \frac{\rho}{2} \|  \mathbf{v}^{(n)} + \mathbf{c}(\mathbf{x}^{(n+1)}) \|^2  \\
			&\, + \langle \bm{\zeta}^{(n)},  \mathbf{p}(\mathbf{x}^{(n)}) \rangle 
			+ \frac{\gamma}{2} \|  \mathbf{p}(\mathbf{x}^{(n+1)}) \|^2  
			 - \theta(\mathbf{x}^{(n+1)}) \\
			& \quad-   \langle \bm{\eta}^{(n)},  \mathbf{v}^{(n)} + \mathbf{c}(\mathbf{x}^{(n+1)})  \rangle
			- \frac{\rho}{2} \|  \mathbf{v}^{(n)} + \mathbf{c}(\mathbf{x}^{(n+1)})\|^2  \\
			& \qquad-   \langle \bm{\zeta}^{(n)}, \mathbf{p}(\mathbf{x}^{(n+1)})  \rangle
			- \frac{\gamma}{2} \|   \mathbf{p}(\mathbf{x}^{(n+1)})\|^2  
			\\
			& =  F(\mathbf{x}^{(n)})  
			+ \langle \bm{\eta}^{(n)},   \mathbf{c}(\mathbf{x}^{(n)}) -  \mathbf{c}(\mathbf{x}^{(n+1)})  \rangle 
			- F(\mathbf{x}^{(n+1)})  \\
			& \,  + \rho \langle  \mathbf{c}(\mathbf{x}^{(n+1)}) + \mathbf{v}^{(n)} ,   \mathbf{c}(\mathbf{x}^{(n+1)})  -\mathbf{c}(\mathbf{x}^{(n)})   \rangle   \\
			& \quad + \frac{\rho}{2} \| \mathbf{c}(\mathbf{x}^{(n+1)})  -\mathbf{c}(\mathbf{x}^{(n)}) \|^2  
			+ \frac{\gamma}{2} \|    \mathbf{p}(\mathbf{x}^{(n+1)}) -    \mathbf{p}(\mathbf{x}^{(n)})\|^2  \\
			& \,\quad + \gamma \langle  \mathbf{p}(\mathbf{x}^{(n+1)}),   \mathbf{p}(\mathbf{x}^{(n+1)}) -    \mathbf{p}(\mathbf{x}^{(n)})  \rangle \\
			& \qquad+ \langle \bm{\zeta}^{(n)},  \mathbf{p}(\mathbf{x}^{(n)} -   \mathbf{p}(\mathbf{x}^{(n+1)} ) \rangle  \\
			& >  F(\mathbf{x}^{(n)})  - F(\mathbf{x}^{(n+1)}) \\
			&\, + \langle  \bm{\eta}^{(n)} +  \rho (\mathbf{c}(\mathbf{x}^{(n+1)}) + \mathbf{v}^{(n)} ),   \mathbf{J}_c^\T (\mathbf{x}^{(n)} -  \mathbf{x}^{(n+1)})  \rangle \\
			&\quad +  \langle \bm{\zeta}^{(n)} +  \gamma(\mathbf{p}(\mathbf{x}^{(n+1)})), \mathbf{J}_p^\T (\mathbf{x}^{(n+1)} -   \mathbf{x}^{(n)}) \rangle 
			\\
			& \quad
			+ \frac{\rho}{2} \mathbf{J}_c^\T \mathbf{J}_c \|\mathbf{x}^{(n+1)})  -\mathbf{x}^{(n)} \|^2  
			+ \frac{\gamma}{2}  \mathbf{J}_p^\T \mathbf{J}_p \|   \mathbf{x}^{(n+1)}) -  \mathbf{x}^{(n)}\|^2  
			\\
			& {\geq}  - \frac{e_x}{2} \|\mathbf{x}^{(n)} - \mathbf{x}^{(n+1)} \|^2 + 
			+\frac{\rho e_c}{2} \|\mathbf{x}^{(n+1)}- \mathbf{x}^{(n)}\|^2 \\
			&\quad + \frac{\gamma \kappa_e}{2} \|\mathbf{x}^{(n+1)}- \mathbf{x}^{(n)}\|^2 \\
			& {\geq} 
			\left ( \frac{\rho  e_c^2  }{2} + \frac{\gamma  e_p^2  }{2} - \frac{e_x}{2}  \right)
			\| \mathbf{x}^{(n+1)}- \mathbf{x}^{(n)}\|^2,
		\end{split} 
	\end{equation}
	
	and by combining \eqref{eq:x_sub}, \eqref{eq:zeta_sub}, and \eqref{eq:eta_sub}, we get
	
	\begin{equation} \label{eq:non_increasing}
		\begin{split}
			\begin{aligned}
				& \mathcal{L}(\mathbf{x}^{(n+1)},\mathbf{v}^{(n+1)};\bm{\eta}^{(n+1)}, \bm{\zeta}^{(n+1)})  -\mathcal{L}(\mathbf{x}^{(n)},\mathbf{v}^{(n)};\bm{\eta}^{(n)},\bm{\zeta}^{(n)}) \\
				&\leq  \frac{e_x - \rho _1 e_c^2 - \gamma e_p^2 }{2} 
				\| \mathbf{x}^{(n+1)}- \mathbf{x}^{(n)}\|^2  
				+  \frac{1}{\rho} \| \bm{\eta}^{(n+1)} -  \bm{\eta}^{(n)}\|^2 \\
				&\quad +  \frac{1}{\gamma} \| \bm{\zeta}^{(n+1)} -  \bm{\zeta}^{(n)}\|^2,
			\end{aligned}
		\end{split} 
	\end{equation}
	which will be nonnegative provided that $\rho e_c^2 +  \gamma e_p^2 >e_x $.  Thus, we get the results.
\end{proof}

Based on Lemma~\ref{lemma:nonincreasing}, Assumptions \ref{assump:prox-regular} and \ref{assump:teb}, we now give the convergence analysis by TEB-VS in the following theorem.

\begin{theorem}[Convergence of TEB-VS\textbf{}]
	\label{theorem:TEB-VS}
	Let Assumptions \ref{assump:prox-regular} and \ref{assump:teb} be satisfied.
	Then, the sequence $\{\mathbf{x}^{(n)}, \mathbf{v}^{(n)}, \bm{\eta}^{(n)}, \bm{\zeta}^{(n)} \}$ generated by TEB-VS converges to a local minimum $(\mathbf{x}^\star, \mathbf{v}^\star, \bm{\eta}^\star,\bm{\zeta}^\star)$. 
\end{theorem}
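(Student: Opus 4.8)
The plan is to turn the one-step estimate \eqref{eq:non_increasing} of Lemma~\ref{lemma:nonincreasing} into a genuine descent inequality, combine it with a lower bound on the augmented Lagrangian to obtain convergence of the Lagrangian values together with vanishing iterate increments, and then identify the limit with a KKT point of \eqref{eq:general_function_inequality} that is a local minimum via Assumption~\ref{assump:teb}. Throughout I work in the neighbourhood of $\mathbf{x}^\star$ on which Assumption~\ref{assump:prox-regular} is in force, so that all continuity and lower-bound statements are valid.

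The delicate point --- and the main obstacle --- is that in \eqref{eq:non_increasing} the dual-difference terms $\frac{1}{\rho}\|\bm{\eta}^{(n+1)}-\bm{\eta}^{(n)}\|^2$ and $\frac{1}{\gamma}\|\bm{\zeta}^{(n+1)}-\bm{\zeta}^{(n)}\|^2$ enter with a \emph{positive} sign, so the estimate is not yet a descent. I would first bound these increments by the primal progress. Writing the first-order optimality condition of the $\mathbf{x}$-subproblem \eqref{eq:admm_x_linear} and substituting the dual recursions \eqref{eq:admm_iteration_eta} yields $\mathbf{0}\in\partial F(\mathbf{x}^{(n+1)}) + \mathbf{J}_p^\T\bm{\eta}^{(n+1)} + \mathbf{J}_c^\T\bm{\zeta}^{(n+1)}$; differencing consecutive iterations and using the prox-regular subgradient estimate \eqref{eq:prox-regular} together with the full-column-rank bounds $\mathbf{J}_p^\T\mathbf{J}_p\succeq e_p^2\mathbf{I}$ and $\mathbf{J}_c^\T\mathbf{J}_c\succeq e_c^2\mathbf{I}$ gives $\|\bm{\eta}^{(n+1)}-\bm{\eta}^{(n)}\|\le c_\eta\|\mathbf{x}^{(n+1)}-\mathbf{x}^{(n)}\|$ and $\|\bm{\zeta}^{(n+1)}-\bm{\zeta}^{(n)}\|\le c_\zeta\|\mathbf{x}^{(n+1)}-\mathbf{x}^{(n)}\|$ for constants $c_\eta,c_\zeta$. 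Substituting these into \eqref{eq:non_increasing} and choosing the penalty parameters so that $\rho e_c^2+\gamma e_p^2$ exceeds $e_x$ by enough to absorb $c_\eta^2/\rho+c_\zeta^2/\gamma$ produces a strictly negative net coefficient, i.e. $\mathcal{L}^{(n)}-\mathcal{L}^{(n+1)}\ge \mu\|\mathbf{x}^{(n+1)}-\mathbf{x}^{(n)}\|^2$ for some $\mu>0$.

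Next I would establish that $\{\mathcal{L}^{(n)}\}$ is bounded below. Completing the square in \eqref{eq:Lagrangian_func} rewrites the penalty-plus-multiplier blocks as $\frac{\rho}{2}\|\mathbf{p}(\mathbf{x})+\mathbf{v}+\bm{\eta}/\rho\|^2 - \frac{1}{2\rho}\|\bm{\eta}\|^2$ and $\frac{\gamma}{2}\|\mathbf{c}(\mathbf{x})+\bm{\zeta}/\gamma\|^2 - \frac{1}{2\gamma}\|\bm{\zeta}\|^2$; on the working neighbourhood $F$ is bounded below by Assumption~\ref{assump:prox-regular}, the indicator $I(\mathbf{v})$ confines $\mathbf{v}\ge\mathbf{0}$, and the dual iterates stay bounded because their increments are summable by the descent of the previous step, so $\mathcal{L}^{(n)}\ge \mathcal{L}_{\min}$. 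A bounded non-increasing sequence converges, hence $\mathcal{L}^{(n)}\to\mathcal{L}^\star$ and, summing $\mathcal{L}^{(n)}-\mathcal{L}^{(n+1)}\ge\mu\|\mathbf{x}^{(n+1)}-\mathbf{x}^{(n)}\|^2$, the primal increments are square-summable, so $\|\mathbf{x}^{(n+1)}-\mathbf{x}^{(n)}\|\to\mathbf{0}$. The increment bounds then force $\|\bm{\eta}^{(n+1)}-\bm{\eta}^{(n)}\|\to\mathbf{0}$ and $\|\bm{\zeta}^{(n+1)}-\bm{\zeta}^{(n)}\|\to\mathbf{0}$, and the dual recursions \eqref{eq:admm_iteration_eta} identify these with the primal residuals, giving asymptotic feasibility $\mathbf{p}(\mathbf{x}^{(n)})+\mathbf{v}^{(n)}\to\mathbf{0}$ and $\mathbf{c}(\mathbf{x}^{(n)})\to\mathbf{0}$.

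Finally I would pass to the limit. By Assumption~\ref{assump:teb} the primal iterates converge to a local minimum $\mathbf{x}^\star$; continuity of the projection \eqref{eq:admm_iteration_v} gives $\mathbf{v}^{(n)}\to\mathbf{v}^\star=\max(\mathbf{0},-\mathbf{p}(\mathbf{x}^\star)-\bm{\eta}^\star/\rho)$, while the vanishing summable dual increments give $\bm{\eta}^{(n)}\to\bm{\eta}^\star$ and $\bm{\zeta}^{(n)}\to\bm{\zeta}^\star$. Taking limits in the stationarity condition $\mathbf{0}\in\partial F(\mathbf{x}^{(n+1)})+\mathbf{J}_p^\T\bm{\eta}^{(n+1)}+\mathbf{J}_c^\T\bm{\zeta}^{(n+1)}$, in the feasibility residuals, and in the complementarity enforced by the $\mathbf{v}$-update shows that $(\mathbf{x}^\star,\mathbf{v}^\star,\bm{\eta}^\star,\bm{\zeta}^\star)$ satisfies the KKT system of \eqref{eq:general_function_inequality}; since prox-regularity in Assumption~\ref{assump:prox-regular} upgrades this stationary point to a local minimizer of $F$, the full sequence converges to the local minimum, as claimed. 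The crux of the argument remains the increment-control step, for without the Jacobian full-rank bounds the positive dual terms in \eqref{eq:non_increasing} cannot be absorbed and the descent estimate cannot be closed.
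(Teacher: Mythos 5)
Your diagnosis of the difficulty is exactly right --- in \eqref{eq:non_increasing} the terms $\frac{1}{\rho}\|\bm{\eta}^{(n+1)}-\bm{\eta}^{(n)}\|^2$ and $\frac{1}{\gamma}\|\bm{\zeta}^{(n+1)}-\bm{\zeta}^{(n)}\|^2$ enter with a positive sign, so Lemma~\ref{lemma:nonincreasing} does not by itself yield descent, and the paper's own proof never addresses this: it invokes the lemma, boundedness of $\mathcal{L}$, Assumption~\ref{assump:teb}, and uniqueness of the $\mathbf{v}$-minimizer from \eqref{eq:admm_iteration_v}, and then simply asserts convergence of the multipliers. Your repair follows the standard nonconvex-ADMM template, but its crux step does not go through as written. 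To get $\|\bm{\eta}^{(n+1)}-\bm{\eta}^{(n)}\|\le c_\eta\|\mathbf{x}^{(n+1)}-\mathbf{x}^{(n)}\|$ by differencing the stationarity conditions of \eqref{eq:admm_x_linear}, you must bound the variation of the subgradients of $F$ \emph{from above}; but prox-regularity \eqref{eq:prox-regular} is a one-sided, hypomonotonicity-type \emph{lower} estimate and supplies no Lipschitz bound on $\partial F$ --- that step needs smoothness of $F$ (Lipschitz gradients), which neither you nor the paper assumes. Moreover, even granting such a bound, recovering $\|\bm{\eta}^{(n+1)}-\bm{\eta}^{(n)}\|$ from $\|\mathbf{J}_p^\T(\bm{\eta}^{(n+1)}-\bm{\eta}^{(n)})\|$ requires $\mathbf{J}_p\mathbf{J}_p^\T\succeq\sigma^2\mathbf{I}$, i.e.\ full \emph{row} rank (and in fact joint injectivity of the stacked map $[\mathbf{J}_p^\T\;\mathbf{J}_c^\T]$, since both multiplier blocks enter the same stationarity relation), whereas the available hypothesis is the full-\emph{column}-rank bound $\mathbf{J}_p^\T\mathbf{J}_p\succeq e_p^2\mathbf{I}$; these are not interchangeable unless the Jacobians are square. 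You would also need to account for the mismatch between $\mathbf{v}^{(n)}$ in the $\mathbf{x}$-stationarity and $\mathbf{v}^{(n+1)}$ in the dual update \eqref{eq:admm_iteration_eta}, which introduces an extra $\|\mathbf{v}^{(n+1)}-\mathbf{v}^{(n)}\|$ term your increment bound silently drops.

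Two further inferences fail. First, square-summability of the dual increments does not imply convergence of $\{\bm{\eta}^{(n)}\}$ or $\{\bm{\zeta}^{(n)}\}$: increments of size $1/n$ are square-summable while the partial sums diverge, so ``$\bm{\eta}^{(n)}\to\bm{\eta}^\star$'' in your final paragraph is unsupported. Nonconvex splitting analyses close this gap either with a Kurdyka--\L{}ojasiewicz finite-length argument or by identifying the dual limit directly from the stationarity relation --- both of which again require the smoothness and row-rank ingredients above. Second, your lower-boundedness argument for $\{\mathcal{L}^{(n)}\}$ is circular: dual boundedness is deduced from summable increments, the increments from telescoping the descent inequality, and the telescoping is only meaningful once $\mathcal{L}$ is already known to be bounded below. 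In short, your attempt is considerably more honest than the paper's two-line proof about where the real difficulty lies --- the uncontrolled positive dual terms in \eqref{eq:non_increasing} --- but the repair is not closed under the assumptions actually stated; as it stands, the conclusion of the theorem rests on Assumption~\ref{assump:teb} for the primal variable and on unproven claims for the dual variables, in both your argument and the paper's.
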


\begin{proof}
	By Lemma~\ref{lemma:nonincreasing}, the sequence $\mathcal{L}(\mathbf{x}^{(n)}, \mathbf{v}^{(n)}; \bm{\eta}^{(n)},\bm{\zeta}^{(n)})$ is nonincreasing. Since $\mathcal{L}(\mathbf{x}^{(n)}, \mathbf{v}^{(n)}; \bm{\eta}^{(n)},\bm{\zeta}^{(n)})$ is upper bounded by $\mathcal{L}(\mathbf{x}^{(0)}, \mathbf{v}^{(0)}; \bm{\eta}^{(0)},\bm{\zeta}^{(0)})$, and lower bounded by $F(\mathbf{x}^{(n)})$. 
	There exists a local minimum $\mathbf{x}^{\star}$ such that the sequence ${\mathbf{x}^{(i)}}$ converges to $\mathbf{x}_{1:T}^{\star}$. The $\mathbf{v}_{1:T}$ subproblem has a unique minimum $\mathbf{v}^{\star}$ \cite{Boyd2011admm}. We then deduce that the sequence $\{\mathbf{x}^{(n)}, \mathbf{v}^{(n)}, \bm{\eta}^{(n)},\bm{\zeta}^{(n)}\}$ converges to $(\mathbf{x}^\star, \mathbf{v}^\star, \bm{\eta}^\star, \bm{\zeta}^\star)$.  We deduce the sequence $\{\mathbf{x}^{(n)}, \mathbf{v}^{(n)}, \bm{\eta}^{(n)}, \bm{\zeta}^{(n)}\}$ generated by TEB-VS is locally convergent to $(\mathbf{x}^\star, \mathbf{v}^\star, \bm{\eta}^\star,\bm{\zeta}^\star)$.
\end{proof}

\section{Experiment Results}\label{Section:exrements}
In this section, we assess the performance of the proposed algorithm through simulated data and real-life trajectory planning applications. First, we introduce the algorithm and present simulation results. Next, we illustrate the practical application of the algorithm in real robots. The results are presented and analyzed in both contexts. As depicted in Fig.\ref{fig:System of simulation}, in addition to the Turtlebot2 robot chassis, which includes odometry and an inertial measurement unit (IMU), radar sensors have been incorporated to determine obstacle distances. All relevant parameters and target equations are input into the G2O framework for graph optimization, and the optimized velocity is then output to odometry to govern the robot's movement.

\subsection{Experimental Settings}
Utilizing the Gazebo simulator on the Robot Operating System (ROS) platform, we construct a robot model that precisely mirrors the physical robot.  All simulation experiments are conducted on Ubuntu 16.04 operating system. Initially, the model world is established using Gazebo, and subsequently, the SLAM method is employed to map this model world in Fig.\ref{fig:gazebo}. Following the creation of the Octomap \cite{2013OctoMap}, three methods including DWA, TEB, and TEB-VS are individually applied for autonomous trajectory planning.

\begin{table*}[b]
	\begin{center}
		\caption{Velocity variation values generated by different methods}
		{
			\begin{tabular}[0.5\linewidth]{c|c|c|c|c} 
				\hline
				Process Velocity& Algorithm & Mean Value & Standard Deviation & Variance\\
				\hline
				& DWA    & 0.07790  & 0.05297 & 0.00281 \\
				Angular velocity variation in rotation & TEB    & {0.02431} & {0.04271} & {0.00182} \\
				& TEB-VS & 0.05018 & 0.09003 & 0.0081 \\
				\hline
				& DWA   & 0.00833 & {0.01887} &	{3.56123$\times10^{-4}$} \\
				Linear velocity variation in rotation  & TEB   & {0.00753} & 0.02420  &	$5.85836\times10^{-4}$ \\
				& TEB-VS& 0.01115 & 0.02067 &	$4.27124\times10^{-4}$ \\
				\hline
				& DWA &0.02523 & 0.02746 & $7.54212\times10^{-4}$  \\
				Angular velocity variation in linear motion & TEB &0.02620 & 0.01483 & $2.19787\times10^{-4}$  \\
				& TEB-VS& \textbf{0.01788} & \textbf{0.00988} & \textbf{9.76143$\times10^{-5}$}  \\
				\hline
				& DWA & 0.00417 & 0.01594 & $2.54073\times10^{-4}$ \\
				Linear velocity variation in linear motion  & TEB & $2.58556\times10^{-4}$ & $3.6577\times10^{-4}$ & $1.33788\times10^{-7}$ \\
				& TEB-VS& \textbf{2.33244$\times10^{-4}$} & \textbf{3.27672$\times10^{-4}$} & \textbf{1.07369$\times10^{-7}$} \\
				\hline
			\end{tabular}
			\label{table:Velocity variation of Algorithms in simulation}
		}
	\end{center}
\end{table*}

As the robot operates within a two-dimensional plane, its motion is systematically divided into two key components: linear motion and rotation. In terms of stability, particular emphasis is placed on monitoring the linear velocity and angular velocity between two consecutive configurations, whether the robot is moving in a straight line or rotating. This focus ensures a robust and stable autonomous trajectory planning process. The robot follows a trajectory from the left side of the map to the right side. Throughout this journey, its orientation undergoes a transformation from the left direction of the starting point to the right direction, providing a visual representation of the autonomous trajectory planning process. The trajectories generated by three distinct algorithms, all originating from the same starting point and concluding at the same endpoint.

\begin{figure}[ht]
	\centering
    \includegraphics[scale=0.3]{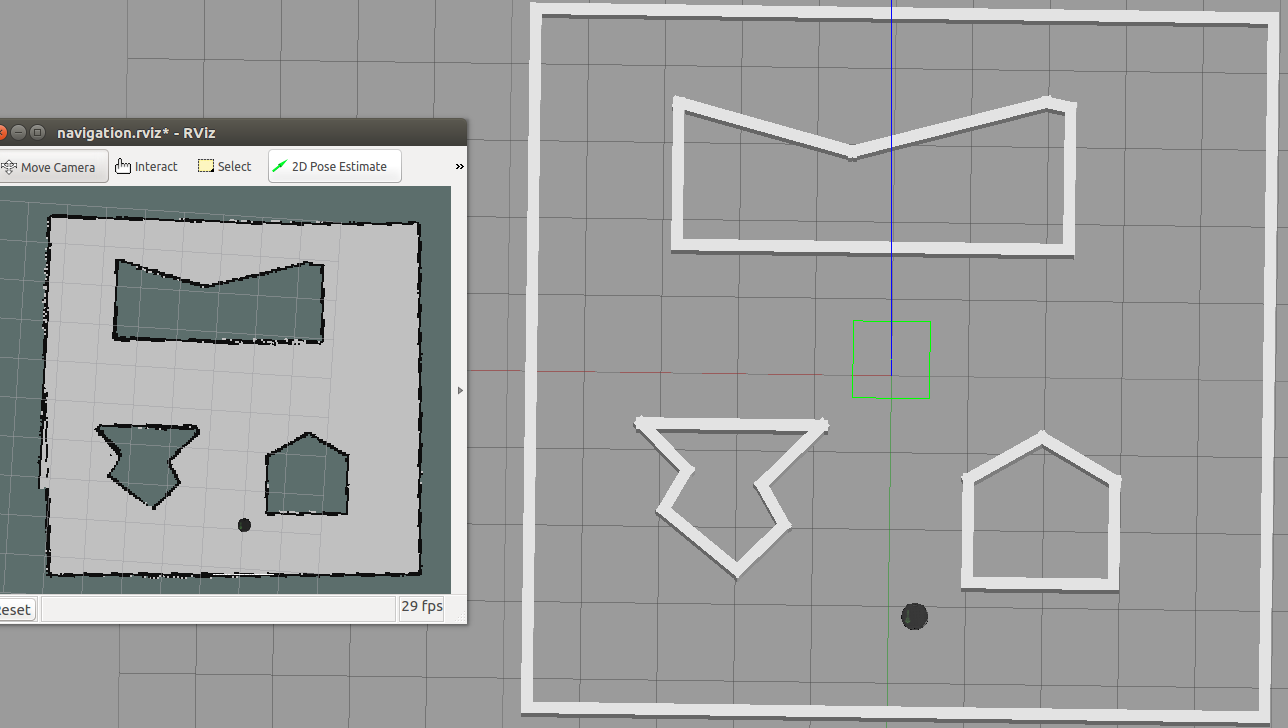}
	\caption{Gazebo world, the right part of this figure shows the simulation room by gazebo and the left part is OctoMap \cite{2013OctoMap} based on this room. The black point is robot Turtlebot2. The details will also be updated on the map in real time when robot navigation in this room.}
	\label{fig:gazebo}
\end{figure}

\begin{figure}[ht]
	\centering
	\includegraphics[scale=0.25]{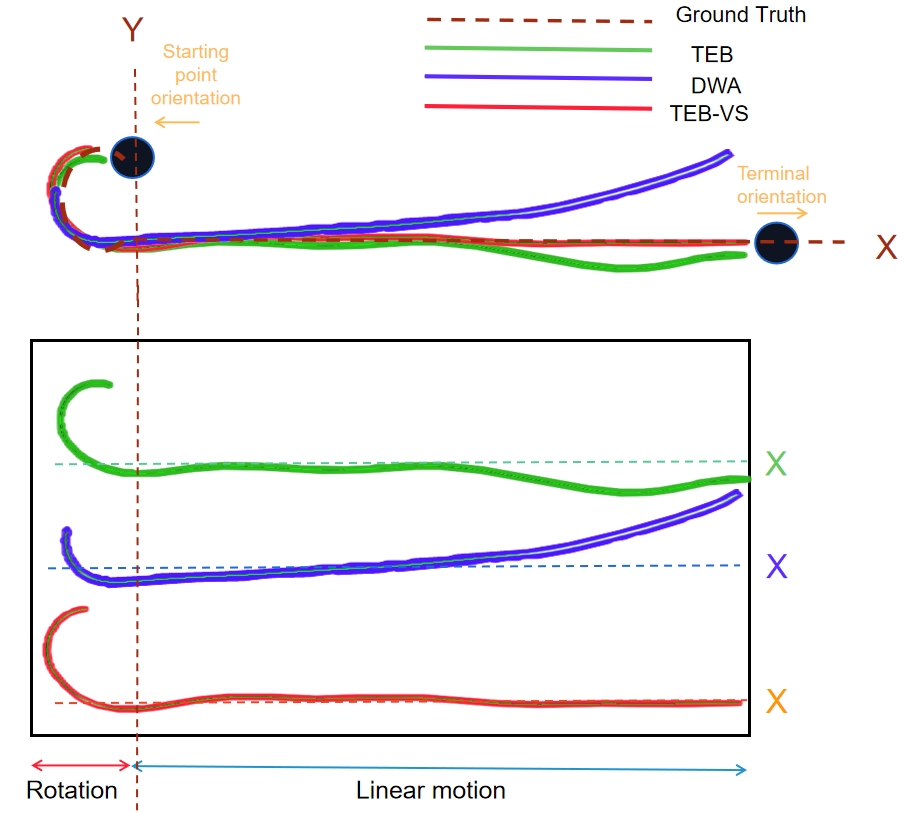}
	\caption{Autonomous trajectory planning generated by different methods. The starting point is at coordinates (2,5) with a specific orientation, and the goal is to navigate coordinates (9,5) with a different orientation in the OctoMap. There are two phases of autonomous trajectory planning motion: rotation and linear motion. The dotted lines are the ground truth. When comparing velocity stability, especially in the context of navigating the robot, it is important to distinguish between the rotational and linear processes. Special attention should be given to the left and right swing of the robot during the linear motion phase, as this significantly impact the accuracy of reaching the endpoint. }
	\label{fig:trajectory of navigation}
\end{figure}

 \subsection{Simulation Results}

We demonstrate the autonomous trajectory planning results generated by different methods in Fig.\ref{fig:trajectory of navigation}. The method outperforms the other algorithms in terms of trajectory planning accuracy, resulting in a markedly superior trajectory. Additionally, the motion of the robot following the method exhibits smoother operation throughout the autonomous trajectory planning process.

Table~\ref{table:Velocity variation of Algorithms in simulation} illustrates the velocity variation results. The comprehensive comparison of velocity stability shows that the TEB-VS method outperforms both TEB and DWA methods. During the linear motion, the comparative analysis between the TEB-VS and TEB methods reveals that the angular velocity variation in the rotation phase is notably greater for the proposed method. TEB-VS exhibits significantly better angular velocity variation compared to both TEB and DWA, achieving optimizations of over $46.5\%$ and $41.1\%$, respectively. Additionally, the standard deviation and variance of TEB-VS are smaller than those of TEB, indicating enhanced stability in linear motion.

 In addition to comparing the running trajectories, we conducted a comparative analysis of the algorithm's runtime performance and the robot's trajectory planning speed. Table~\ref{table:Algorithm run time test of simulation} presents the running time results for DWA, TEB and TEB-VS methods. Notably, given that the typical sampling frequency of the robot is generally around $1-20$Hz, the operation time for three methods remains significantly shorter than the sampling frequency. As a result, the impact on the robot's operational efficiency is minimal.

\begin{table}[!h]
	\begin{center}
		\setlength{\belowcaptionskip}{-0.1cm}
		\caption{Running time of simulation}
	{
		\begin{tabular}{c|c}
			\hline
			Method & Running time(s) \\
			\hline
			DWA & 0.00057971\\
			TEB & 0.00057971\\
			\textbf{TEB-VS} & 0.00171429\\
			\hline
		\end{tabular}
		\label{table:Algorithm run time test of simulation}
		}
	\end{center}
\end{table}

\vspace{-0.3cm}

\subsection{Real-Life Robot Results}

Similar to the simulation results, comparative tests on running time and velocity variation of the robots affirm that the method's running time is negligible, and the introduction of additional iterations has minimal impact on the algorithm's performance. In testing the actual speed change of the Turtlebot2 robot, we observed that the stability optimization result during linear motion decreased from 46.5\% in simulation to 26.7\%. Despite this reduction, the outcome remains superior to that of TEB, confirming the efficacy of the method in real-world scenarios.

\section{Conclusion}
\label{Section:conclude}
In this article, we present the new Timed-Elastic-Band based variable splitting (TEB-VS) method to solve autonomous trajectory planning problems. We also establish the convergence of the proposed method. Through extensive experiments with a real robot, our algorithm demonstrates superior effectiveness compared to the other classical methods. The TEB-VS method exhibits a 46.5\% increase in speed stability in simulation and a 26.7\% increase in real-life robot experiments. Our future focus will delve into further optimizations, particularly in the realm of fastest path or trajectory optimization.

\bibliographystyle{IEEEtran}
\bibliography{refs,refs-gr}
\end{document}